\newtheorem{lemm}{Lemma}
\newtheorem{prop}{Proposition}
\theoremstyle{definition}
\newtheorem{deff}{Definition}
\newtheorem{remark}{Remark}
\newtheorem{example}{Example}
\title{\LARGE \bf
Necessary and Sufficient Conditions for the Optimization-Based Concurrent Execution of Learned Robotic Tasks
}
\author{Sheikh A. Tahmid and Gennaro Notomista 
\thanks{Sheikh A. Tahmid is with the Department of Electrical and Computer Engineering, University of Waterloo, Waterloo, ON, Canada.
	{\tt\small sheikh.abrar.tahmid@uwaterloo.ca}}%
\thanks{Gennaro Notomista is with the Department of Electrical and Computer Engineering, University of Waterloo, Waterloo, ON, Canada.
	{\tt\small gennaro.notomista@uwaterloo.ca}}%
}
\begin{document}

\maketitle
\thispagestyle{empty}
\pagestyle{empty}

\begin{abstract}
In this work, we consider the problem of executing multiple tasks encoded by value functions, each learned through Reinforcement Learning, using an optimization-based framework.
Prior works develop this framework but did not address when learned value functions can be concurrently executed.
This work's main contributions consist of theorems which provide necessary and sufficient conditions to
concurrently execute sets of learned tasks within subsets of the state space using the previously proposed
min-norm controller.
These theorems provide insight into when learned control tasks can be made concurrently executable, when they may already be so, and when concurrent execution is not possible under the proposed framework.
We also extend the proposed framework to account for value functions trained with a discount factor, making it more compatible with standard RL practices.
\end{abstract}

\section{INTRODUCTION}
In recent years, interest has grown from the robotics, control systems, and machine learning communities in applying Reinforcement Learning (RL) to robotic control tasks.
RL is deeply connected to the field of optimal control \cite{bertsekas}.
If a robotic task can be posed as an optimal control problem, its solution, given enough
time and computational resources, may be approximated by an RL algorithm to produce some learned policy.
RL algorithms can potentially solve optimal control problems that are otherwise intractable using control-theoretic or numerical optimization-based methods.
RL algorithms make very general assumptions, to varying degrees, on the cost to optimize or the model of the system, giving the field of RL the potential to tackle difficult control problems in robotics.
In both optimal control and RL, the solved or learned task is often encoded in a value or cost-to-go function -- terms which we use interchangeably in this paper.
\par
For robotic systems that exhibit redundancy -- the physical capability of concurrently executing multiple tasks,
such as multi-robot systems \cite{redundancyBook}, \cite{multiRobotRedundancy1} and manipulators \cite{redundancyBook}, \cite{robotManipulatorsRedundancy1} -- 
it can be useful to exploit that redundancy to concurrently execute multiple control tasks at the same time.
Previous works have developed relevant theory for robotic systems concurrently executing explicitly designed tasks such as Jacobian-based \cite{jacobianTasks} and Extended Set-Based (ESB) \cite{esb} tasks.
However, due to the potential of RL algorithms, as discussed previously, this work is concerned with the concurrent execution of multiple control tasks learned using RL.
\par
Previous works in \cite{gennaroDars} and \cite{meAamas} develop a framework for simultaneously executing multiple tasks, where each task is learned using an RL algorithm and each resulting value function is executed through a constraint in a pointwise min-norm controller.
The benefit of this overall approach, initially proposed in \cite{gennaroDars} and further expanded on in \cite{meAamas}, is that it provides a method of combining multiple learned tasks where the subsets of tasks and priorities between
them may vary over time.
The work in \cite{meAamas} focuses on training tasks so that the learned value functions satisfy a notion of \textit{independence}, allowing trained tasks to be combined and executed together that may have otherwise had conflicting objectives.\par
The main contribution of this paper is the specification of conditions for when a set of learned value functions can be combined and executed concurrently with the proposed min-norm controller. 
We present propositions which provide necessary and sufficient conditions for a set of learned value functions to be \textit{concurrently executable} -- a term we define along with others in this work.
While the prior works \cite{gennaroDars}, \cite{meAamas} develop the mentioned framework, they leave unanswered the fundamental question of when learned value functions can be concurrently executed using the proposed min-norm controller.
The theory presented in this paper answers this question, providing insight into when learned control tasks are possible to be made concurrently executable, when they might already inherently be concurrently executable and when it is not possible at all to make a set of learned tasks concurrently executable.
Beyond its immediate application to the framework of \cite{gennaroDars}, \cite{meAamas}, the theorems developed in this work may provide a foundation for the rigorous analysis or development of multi-objective RL methods.
\par
\subsection{Related Works}
Many frameworks for exploiting the redundancy of robotic systems in order to concurrently execute multiple tasks have been proposed over the years (see \cite{khatib}, \cite{robotManipulatorsRedundancy1}, \cite{esb} to name a few).
There also exist works which not only propose frameworks, but analyze the execution of such algorithms.
The work in \cite{jacobianTasks} analyzes the stability of algorithms for executing multiple Jacobian-based tasks for redundant, kinematic manipulators, utilizing definitions of \textit{independence} and \textit{orthogonality}.
The work in \cite{esb} uses Extended Set-Based (ESB) Tasks to generalize Jacobian-based tasks and analyze the concurrent execution of ESB tasks, also generalizing notions of \textit{independence} and \textit{orthogonality} for ESB tasks.\par
A model-based framework for the execution of multiple robotic tasks, with each task learned through deep RL, was proposed in \cite{gennaroDars}.
With certain assumptions on the system and reward function used for each task, the work in \cite{gennaroDars} proposes a pointwise min-norm controller where each task's learned value function is encoded as a constraint of the quadratic program. An expression derived from a modified version of Sontag's formula, described in the work, is used to recover the optimal input for each task. Slack variables, based on previous work in \cite{gennaroEcc}, are used to prioritize the execution of different learned tasks.\par
Subsequent work in \cite{meAamas} extends the work in \cite{gennaroDars}. The work in \cite{meAamas} defines notions of \textit{independence} and \textit{orthogonality} based on the notions of the same name used in \cite{jacobianTasks} and \cite{esb} to analyze the relationships between learned value functions and the ability to simultaneously execute multiple such learned tasks together. The work in \cite{meAamas} proposes a cost on the input which can be used to make learned value functions \textit{independent} to previously trained value functions, allowing one to train a sequence of \textit{independent} value functions that are then possible to place as constraints in the min-norm controller proposed in \cite{gennaroDars} and concurrently execute the learned tasks together. Additionally, a variant of value iteration for continuous systems is proposed to approximate value functions with the proposed input cost which can be used in addition to existing RL algorithms.
\subsection{Contributions}
\begin{itemize}
\item{
The main contributions of this paper are in Section IV where we present propositions which provide necessary and sufficient conditions for a set of learned value functions to be \textit{concurrently executable}. \textit{Concurrent executability} and \textit{concurrent controllability} are terms which we define in Section III, similar to definitions of \textit{independence} and \textit{orthogonality} in \cite{jacobianTasks}, \cite{esb} and \cite{meAamas}.}
\item{
To also further develop the framework developed in \cite{gennaroDars} and \cite{meAamas}, we also present in Subsection IV.C a modified expression for the right-hand-side of the constraints of the min-norm controller to account for value functions trained with a discount factor -- something not done in the work in \cite{gennaroDars} that introduced the min-norm controller.
}
\end{itemize}


\section{MATHEMATICAL BACKGROUND}
In this section, we first state the assumptions we make in our chosen problem setting.
We then formally describe the min-norm controller and optimization-based framework for executing learned tasks developed in \cite{gennaroDars}, \cite{meAamas}.
\subsection{Assumptions on System and Learned Value Functions}
Firstly, we assume that the system is deterministic and control-affine. The dynamics can be represented with: 
$$
\dot{x}(t) = f(x) + g(x) u \eqno{(1)}
$$
where $x \in \mathcal{X} \subseteq \mathbb{R}^n$ and $u \in \mathcal{U} \subseteq \mathbb{R}^m$ denote the state and input, respectively, and 
$f : \mathbb{R}^n \rightarrow \mathbb{R}^n$ and $g : \mathbb{R}^n \rightarrow \mathbb{R}^{n \times m}$ are continuously differentiable vector fields.\par
We consider the case where we have $N$ tasks which we want to concurrently execute.
We assume that each task $i \in \{1, \ldots, N\}$ has an associated state cost, $q_i(x)$, that is nonnegative.
To learn each task $i$, we consider solving the following infinite horizon optimal control problem:
$$
J^{\star}_i(x(t)) = \min_{u(\cdot)} \int_t^{\infty} e^{- \beta \left(\tau - t \right) } \left( q_i(x) + u^{\top} R_i(x) u \right) d \tau \eqno{(2)}
$$
where $\beta \ge 0$ and $R_i : \mathbb{R}^n \rightarrow \mathbb{S}^{n}_{++} $ maps each state $x \in \mathcal{X}$ to a positive definite matrix. 
We assume that each task $i$ has at least one reachable terminal goal state $x_T \in \mathcal{X}$ where $q_i(x_T) = 0$
and $x_T$ is an equilibrium point.
Since the cost-to-go function in (2) cannot be solved analytically in the general case, we opt to use deep RL algorithms to learn approximations which we will denote as $\tilde{J}_i$ for each task $i$.
Note that RL is typically posed as a maximization problem, where a function such as one in (2) would be referred to as a value function, but in this paper, we pose it as a minimization problem. Actor-critic methods \cite{sac}, \cite{actorCriticSurvey} which learn both a policy and value function are examples of RL algorithms which can be used.
The work in \cite{meAamas} proposes a variant of fitted value iteration to specifically approximate a cost-to-go function of the form in (2), when the dynamics are control-affine, building upon work in \cite{cfvi}.

\subsection{The Pointwise Min-Norm Controller}
Given a set of learned cost-to-go functions $\tilde{J}_1,\ldots,\tilde{J}_N$, each implemented with a neural network, we can select inputs that execute each task concurrently using the pointwise min-norm controller proposed in \cite{gennaroDars}:
\begin{equation} \tag{3}
\begin{aligned}
	\min_{u \in \mathcal{U}, \delta \in \mathbb{R}^N} \quad & { \lVert u \rVert }^2 + \kappa { \lVert \delta \rVert }^2 \\
	\textrm{s.t.} \quad & L_f \tilde{J}_1 (x) + L_g \tilde{J}_1 (x) u \le - \sigma_1(x) + \delta_1 \\
	\quad & \vdots \\
	\quad & L_f \tilde{J}_N (x) + L_g \tilde{J}_N (x) u \le - \sigma_N(x) + \delta_N \\
	\quad & K \delta \ge 0.
\end{aligned}
\end{equation}
Note that $L_f \tilde{J}_i (x)$ and $L_g \tilde{J}_i (x)$ are Lie derivatives where $L_f \tilde{J}_i (x) = \frac{ \partial \tilde{J}_i }{ \partial x} (x) f(x)$ and $L_g \tilde{J}_i (x) = \frac{ \partial \tilde{J}_i }{ \partial x} (x) g(x)$. 
Also note that since $\tilde{J}_1,\ldots,\tilde{J}_N$ are implemented with neural networks, $L_f \tilde{J}_i (x)$ and $L_g \tilde{J}_i (x)$ can be calculated using back-propagation.
Each $\tilde{J}_1,\ldots,\tilde{J}_N$ can be treated as a candidate Control Lyapunov Function (CLF) where picking an input that drives a $\tilde{J}_i$ toward $0$ is equivalent to making progress on that learned task.
$\sigma_1(x), \ldots, \sigma_N(x)$ are nonnegative scalar functions described in \cite{gennaroDars} that are designed to recover the optimal input for each learned cost-to-go function, assuming that each $\tilde{J}_i$ approximates an undiscounted cost functional of the form in (2) where $R_i(x) = I$. 
The controller in (3) also contains the slack variables $\delta_1, \ldots, \delta_N$ which can be used to prioritize certain tasks over others as specified by the matrix $K \in \mathbb{R}^{N \times N}$, which is required when the optimal input for each task cannot be selected

\section{Definitions for Relationships Between Learned Tasks}
We now formally define the notion of concurrently executing multiple learned control tasks.
Assuming that there are no additional constraints on the input, the ability to concurrently execute multiple learned value functions 
$\tilde{J}_1, \ldots, \tilde{J}_N$ using the min-norm controller in (3) depends on the geometric relationships between the Lie derivatives $L_g \tilde{J}_1, \ldots, L_g \tilde{J}_N$. 
Therefore, to classify the relationships between learned value functions and the ability to concurrently execute them, we provide Definitions 1-4 which are based on these Lie derivatives.
To start, the work in \cite{meAamas} defines notions of independence and orthogonality which we present again in Definitions 1 and 2.
\begin{deff}
	Consider a set of learned value functions $\tilde{J}_1, \ldots \tilde{J}_N$ of the form in (2).
	For each $x \in \mathcal{X}$ and $i \in \{1, \ldots, N\}$, we assume that $L_g \tilde{J}_i(x) = 0$ if and only if $\tilde{J}_i(x) = 0$.
	$\tilde{J}_1, \ldots, \tilde{J}_N$ are \textit{independent} to each other at state $x \in \mathcal{X}$ if the nonzero vectors within
	${L_g \tilde{J}_1(x)}^{\top}, \ldots, {L_g \tilde{J}_N(x)}^{\top}$ are linearly independent.
\end{deff}


\begin{deff}
	Consider a set of learned value functions $\tilde{J}_1, \ldots \tilde{J}_N$ of the form in (2).
	$\tilde{J}_1, \ldots, \tilde{J}_N$ are \textit{orthogonal} to each other at state $x$ if
	$\langle (L_g \tilde{J}_i (x) )^{\top}, (L_g \tilde{J}_j (x) )^{\top} \rangle = 0$ $\forall i, j \in \{1, \ldots, N \}$ where $i \ne j$.
\end{deff}

\begin{remark}
	Note that if value functions are orthogonal at $x$, then they are also independent at $x$.
\end{remark}

The work in \cite{meAamas} provided methods for training value functions to be independent to each other.
However, learned value functions $\tilde{J}_1, \ldots, \tilde{J}_N$ do not need to be independent at every state $x \in \mathcal{X}$ to be possible to concurrently execute together. 
For example, if for two value functions $\tilde{J}_i, \tilde{J}_j \in \{\tilde{J}_1, \ldots, \tilde{J}_N \}$, 
$L_g \tilde{J}_i$ and $L_g \tilde{J}_j$ were collinear, but pointed in the same direction, then it would still be possible to pick an input that made progress at both tasks.
Even in the case where $L_g \tilde{J}_i(x)$ and $L_g \tilde{J}_j(x)$ pointed in opposite directions, the values of $L_f \tilde{J}_i(x)$ and $L_f \tilde{J}_j (x)$ may still make it possible to make progress at both tasks.
This motivates the more general definitions of \textit{concurrent executability} and \textit{concurrent controllability} that we introduce in this paper, and refer to in the new results that we present in this work.
\begin{deff}
	Consider a set of learned value functions $\tilde{J}_1, \ldots \tilde{J}_N$ of the form in (2).
	$\tilde{J}_1, \ldots, \tilde{J}_N$ are \textit{concurrently executable} at $x \in \mathcal{X}$ if $\exists u \in \mathcal{U}$ such that $L_f \tilde{J}_i (x) + L_g \tilde{J}_i (x) u < 0$ for all $i \in \{1, \ldots, N \}$ where $\tilde{J}_i(x) > 0$. 
\end{deff}

\begin{deff}
	Consider a set of learned value functions $\tilde{J}_1, \ldots \tilde{J}_N$ of the form in (2).
	For each $x \in \mathcal{X}$ and $i \in \{1, \ldots, N\}$, we assume that $L_g \tilde{J}_i(x) = 0$ if and only if $\tilde{J}_i(x) = 0$.
	$\tilde{J}_1, \ldots \tilde{J}_N$ are \textit{concurrently controllable} at $x \in \mathcal{X}$ if there is no $P, Q \subseteq \{1, \ldots, N\}$ where $ \sum_{i \in P} L_g \tilde{J}_i(x) = - k \sum_{i \in Q}  L_g \tilde{J}_i (x)  $ and $\sum_{i \in P}  L_g \tilde{J}_i(x) \ne 0$ for some $k > 0$.
\end{deff}

\begin{remark}
	If a set of learned value functions $\tilde{J}_1, \ldots, \tilde{J}_N$ are \textit{concurrently controllable}, then they are \textit{concurrently executable}, assuming there are no additional constraints on the input.
	If a set of learned value functions $\tilde{J}_1, \ldots, \tilde{J}_N$ are \textit{independent}, then it is clear they are \textit{concurrently controllable} and thus also \textit{concurrently executable}.
\end{remark}

Now that we possess the formal language to denote what it means for a set of value functions to be concurrently executable, we proceed to the main theoretical results in the next section. Specifically, we present propositions which provide necessary and sufficient conditions for a set of value functions to be concurrently executable, as it is defined in Definition 3, within subsets of the state space.

\section{MAIN RESULTS}
Using the definitions presented in the previous section, we now establish conditions under which a set of value functions are concurrently executable within subsets of the state space. We first present necessary conditions that must be satisfied, followed by sufficient conditions that guarantee concurrent controllability, and hence concurrent executability. These results formalize intuitive requirements into precise necessary and sufficient conditions.
\subsection{Necessary Conditions for the Concurrent Executability of Learned Value Functions}
Consider a set of value functions $\tilde{J}_1, \ldots, \tilde{J}_N$ and a compact subset of the state space $A \subset \mathcal{X}$. 
For $\tilde{J}_1, \ldots, \tilde{J}_N$ to be concurrently executable everywhere in $A$, it may make intuitive sense that there must exist at least one common ``goal'' state $x \in A$ that is an equilibrium point where $\tilde{J}_1(x) = \cdots = \tilde{J}_N(x) = 0$. 
Furthermore, if $\tilde{J}_1, \ldots, \tilde{J}_N$ were concurrently executable everywhere in $A$, we would also expect that we never leave the sublevel set for any particular value function $\tilde{J}_i \in \{\tilde{J}_1, \ldots, \tilde{J}_N\}$ when executing each task simultaneously.
We prove that these intuitions are true and formalize them in Propositions 1 and 2 using what we first prove in Lemma 1.

\begin{lemm}
	Let $\tilde{J}_1, \ldots, \tilde{J}_N$ be a set of learned value functions as described in (2).
	Let $A \subset \mathcal{X}$ be a compact set. 
	Assume that $\forall x \in A$, $\exists u \in \mathbb{R}^m$ such that 
	$\dot{x} = f(x) + g(x) u \in T_A(x)$, the tangent cone to $A$ at $x$, and 
	$L_f \tilde{J}_i (x) + L_g \tilde{J}_i (x) u \le 0$ $\forall i \in \{1, \ldots, N \}$ where 
	$L_f \tilde{J}_i (x) + L_g \tilde{J}_i (x) u < 0$ when $\tilde{J}_i(x) > 0$.
	Then there exists some set of equilibrium states $G \subset A$ where $\forall x \in G$, $\tilde{J}_1(x) = \tilde{J}_2(x) = \cdots = \tilde{J}_N(x) = 0$ and $x$ is reachable from the set $A$ by staying within $A$. 
\end{lemm}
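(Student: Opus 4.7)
The plan is to use $V(x) := \sum_{i=1}^N \tilde{J}_i(x)$ as a common Lyapunov function, combining a viability-theoretic trajectory construction with a LaSalle-type analysis. Each $\tilde{J}_i$ is nonnegative because the running cost in (2) is nonnegative and $\beta \ge 0$, so $V \ge 0$ on $\mathcal{X}$ and $V$ vanishes exactly on the common zero level set of the $\tilde{J}_i$. First I would invoke viability theory applied to the set-valued map
\[
\mathcal{U}_A(x) := \bigl\{u \in \mathbb{R}^m : f(x)+g(x)u \in T_A(x),\; L_f\tilde{J}_i(x)+L_g\tilde{J}_i(x)u \le 0 \; \forall i \bigr\},
\]
which is nonempty on $A$ by hypothesis, convex (cut out by linear inequalities plus the cone $T_A(x)$), and closed-graphed under the continuity of $f$, $g$, and the $\tilde{J}_i$. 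A standard measurable selection combined with Nagumo's theorem for differential inclusions then yields, for every $x_0 \in A$, an absolutely continuous trajectory $x(\cdot)$ of (1) that starts at $x_0$, remains in $A$ for all $t \ge 0$, and is driven by an input $u(t) \in \mathcal{U}_A(x(t))$ almost everywhere.

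Along such a trajectory, $\dot{V}(x(t)) = \sum_i [L_f\tilde{J}_i + L_g\tilde{J}_i u](x(t)) \le 0$, strict whenever $V(x(t)) > 0$. Hence $V(x(t)) \to V^\star \ge 0$; compactness of $A$ makes the $\omega$-limit set $\Omega \subset A$ nonempty, and $V \equiv V^\star$ on $\Omega$ by continuity. I would then argue $V^\star = 0$ by contradiction: if some $x^\star \in \Omega$ had $\tilde{J}_i(x^\star) > 0$, the hypothesis would supply an admissible control at $x^\star$ producing strictly negative $\dot{V}$, and an upper-semicontinuity / compactness argument applied to $x \mapsto \inf_{u \in \mathcal{U}_A(x)} \sum_i [L_f\tilde{J}_i + L_g\tilde{J}_i u](x)$ would promote this to a uniformly negative upper bound on $\dot{V}$ in a neighborhood of $\Omega$, contradicting $V(x(t)) \to V^\star$. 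Thus $\Omega$ lies in the common zero set of the $\tilde{J}_i$. Taking $G$ to be the equilibria of (1) that lie in $\Omega$ gives $G \subset A$ with $\tilde{J}_1 = \cdots = \tilde{J}_N = 0$ on $G$, asymptotically reachable from every $x_0 \in A$ via a trajectory staying in $A$; non-emptiness of $G$ follows from the setup's assumption that each task's zero-cost terminal states are equilibria of (1), specialized to the common zero set forced by the Lyapunov argument.

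The main obstacle will be the LaSalle-type step that upgrades the purely pointwise condition ``$\dot{V}$ can be made strictly negative whenever $V > 0$'' into the asymptotic conclusion ``$V(x(t)) \to 0$'': the driving control of our constructed trajectory near an $\omega$-limit point need not coincide with the strictly decreasing control guaranteed by the hypothesis at that point, so extracting a contradiction requires a careful use of lower semicontinuity of the admissible-control multifunction $\mathcal{U}_A$ together with compactness of $A$ to bound $\dot{V}$ above by a continuous, strictly negative function of $V$ on $\{V > 0\} \cap A$. A secondary technicality is identifying genuine equilibrium points within $\Omega$ (rather than merely common zeros of the $\tilde{J}_i$), which is where the setup's per-task terminal-equilibrium assumption enters in an essential way.
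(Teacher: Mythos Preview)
Your proposal is correct and follows essentially the same approach as the paper: define $V=\sum_i \tilde J_i$, use the hypothesis to obtain a trajectory that stays in the compact set $A$ with $\dot V\le 0$ (strict when $V>0$), and conclude via a LaSalle-type argument that trajectories converge to the common zero set, which the setup identifies with equilibria. The only packaging difference is that the paper obtains the trajectory by positing the min-norm feedback \eqref{eq:(4)-style} and invoking Khalil's LaSalle theorem directly, whereas you construct it via viability theory and a measurable selection for the differential inclusion; your route is more careful about existence and regularity, but the underlying argument is the same.
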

\begin{proof}
Consider a min-norm controller of the form:
	\begin{align*} \tag{4}
		\min_{u \in \mathbb{R}^m} \quad & { \lVert u \rVert }^2 \\
		\textrm{s.t.} \quad & L_f \tilde{J}_1 (x) + L_g \tilde{J}_1 (x) u \le - \sigma_1(x) \\
		\quad & \vdots \\
		\quad & L_f \tilde{J}_N (x) + L_g \tilde{J}_N (x) u \le - \sigma_N(x) \\
		\quad & f(x) + g(x) u \in T_A(x).
	\end{align*}
	Each of these constraints can be satisfied for all $x \in A$ under our assumptions.
	In other words, for any $x \in A$, there is some feasible input $u \in \mathbb{R}^m$ that stays in $A$ and also makes progress on each task $\tilde{J}_1, \ldots, \tilde{J}_N$.
	Let $\bar{V}(x) = \sum_{i = 1}^N \tilde{J}_i(x)$. 
	Let $\bar{f}(x) = f(x) + g(x) u$ be the dynamics of this autonomous system resulting from the specified controller.
	The set $A$ would be positively invariant and $L_{\bar{f}} \bar{V} (x) \le 0$ for each $x \in A$ along the resulting trajectory. Referring to \cite{khalil}, we can then apply Theorem 4.4 (LaSalle's theorem).
	We can see that $L_{\bar{f}} \bar{V}(x) = 0$ if and only if $\tilde{J}_1(x) = \tilde{J}_2(x) = \cdots = \tilde{J}_N(x) = 0$ and $x$ is an equilibrium state. 
	Let $G = \{ x \in A : \tilde{J}_1(x) = \cdots = \tilde{J}_N(x) = 0 \}$. 
	The proof in \cite{khalil} implicitly proves that the invariant set, which is $G$ in our case, is nonempty.
	Therefore, there exists some set of equilibrium states where $\tilde{J}_1(x) = \tilde{J}_2(x) = \cdots = \tilde{J}_N(x) = 0$ that is reachable from the set $A$ and by staying within the set $A$.
\end{proof}
\begin{prop}
	Let $\tilde{J}_1, \ldots, \tilde{J}_N$ be a set of learned value functions as described in (2).
	For any compact set $A \subset \mathcal{X}$, if there does not exist a set of common equilibrium states $G \subseteq A$ where $\forall x \in G$, $\tilde{J}_1(x) = \cdots = \tilde{J}_N(x) = 0$, then there exists some 
	$x \in A$ where either it is not possible to pick an input that causes the system to stay in $A$ or the tasks 
	are not concurrently executable in $A$. 
\end{prop}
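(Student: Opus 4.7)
The plan is to prove Proposition~1 as the direct contrapositive of Lemma~1. I would assume for contradiction that the conclusion fails, so that for every $x \in A$ it is both possible to pick an input keeping the system in $A$ and the tasks are concurrently executable in $A$ at $x$. I would read the latter as the existence of a single input $u \in \mathbb{R}^m$ satisfying $f(x) + g(x) u \in T_A(x)$ together with $L_f \tilde{J}_i(x) + L_g \tilde{J}_i(x) u < 0$ for every $i$ with $\tilde{J}_i(x) > 0$, since allowing separate inputs for the two requirements would make the qualifier ``in $A$'' vacuous. Under this reading, the negation of the conclusion directly supplies, at each $x \in A$, the input required by the hypothesis of Lemma~1 on the indices with $\tilde{J}_i(x) > 0$.

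The next step is to verify the nonstrict descent condition $L_f \tilde{J}_i(x) + L_g \tilde{J}_i(x) u \le 0$ for those indices $i$ with $\tilde{J}_i(x) = 0$, which Lemma~1 also requires. The assumption in Definition~1 gives $L_g \tilde{J}_i(x) = 0$ at any such $x$. Moreover, because each $\tilde{J}_i$ is the nonnegative optimal value of (2) and attains its global minimum of $0$ at these states, the gradient $\partial \tilde{J}_i / \partial x$ vanishes there, so $L_f \tilde{J}_i(x) = 0$ as well; the inequality then holds with equality for any choice of $u$. With this, every hypothesis of Lemma~1 is verified at every $x \in A$.

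Applying Lemma~1 would then produce a nonempty common equilibrium set $G \subset A$ on which $\tilde{J}_1 = \cdots = \tilde{J}_N = 0$, contradicting the premise of the proposition that no such $G$ exists. I expect the principal obstacle to be the interpretational step above: if one insists on reading the two clauses of the ``either/or'' as concerning entirely unrelated inputs, the contrapositive would need an additional convexity or continuity argument on the feasible input set at each $x$ to combine the tangent-cone input and the strict-descent input into a single $u$. Stating the intended joint reading explicitly at the outset of the proof sidesteps this issue and keeps the argument a clean application of Lemma~1.
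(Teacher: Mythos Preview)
Your proposal is correct and follows the same approach as the paper, which simply invokes the contrapositive of Lemma~1. You are in fact more careful than the paper in two respects---verifying the nonstrict descent condition at states with $\tilde{J}_i(x)=0$ and explicitly flagging the single-versus-separate-input ambiguity in the disjunction---but these are refinements of the same argument rather than a different route.
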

\begin{proof}
	Lemma $1$ shows that if for every state in a compact set, $A \subset \mathcal{X}$, 
	there is an input that keeps the system in $A$ and concurrently makes progress on each task, then
	there exists some set of equilibrium states $G \subset A$ where $\forall x \in G$, 
	$\tilde{J}_1(x) = \cdots = \tilde{J}_N(x) = 0$. By the contrapositive of this, if such a set $G$ does
	not exist within any compact set, $A$, then for any compact set, $A$, there is some state
	$x \in A$ where it is either not possible to pick an input $u \in \mathcal{U}$ that keeps the system within
	$A$ or the value functions $\tilde{J}_1, \ldots, \tilde{J}_N$ are not concurrently executable at that state.
\end{proof}

\begin{prop}
	Let $\tilde{J}_1, \ldots, \tilde{J}_N$ be a set of learned value functions as described in (2).
	Let $A \subset \mathcal{X}$ be a compact set of states.
	Let $G$ be the common set of equilibrium states where $\forall x \in G$, 
	$\tilde{J}_1(x) = \cdots = \tilde{J}_N(x) = 0$.
	For $i \in \{1, \ldots, N\}$ and $c \ge 0$, let $S_{i , c} := \{ x \in A : \tilde{J}_i(x) \le c \}$.
	If there exists some $i \in \{1, \ldots, N\}$ and $c \ge 0$ where no state in $G$ is reachable from $S_{i, c}$
	without leaving $S_{i, c}$, then either $\tilde{J}_1, \ldots, \tilde{J}_N$ are not concurrently executable 
	at every state in $A$ or there exists some state $x \in A$ where it is not possible to pick an input that keeps the system in $A$. 
\end{prop}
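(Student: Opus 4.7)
The plan is to prove Proposition 2 by contrapositive, mirroring the structure of Proposition 1 but applied to the sublevel set $S_{i,c}$ rather than to $A$ itself. First I would assume the negation of the conclusion, namely that $\tilde{J}_1, \ldots, \tilde{J}_N$ are concurrently executable at every $x \in A$ and that at every $x \in A$ there exists an input keeping the system inside $A$. The goal then is to show that, for every $i \in \{1, \ldots, N\}$ and every $c \ge 0$, there is a state of $G$ reachable from $S_{i,c}$ without leaving $S_{i,c}$.

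I would fix arbitrary $i$ and $c \ge 0$ and observe that $S_{i,c} = A \cap \tilde{J}_i^{-1}((-\infty, c])$ is compact, since $\tilde{J}_i$ is continuous and $A$ is compact. The main technical step is to verify the joint hypothesis of Lemma 1 with $A$ replaced by $S_{i,c}$: for every $x \in S_{i,c}$ there exists $u \in \mathbb{R}^m$ such that $f(x) + g(x) u \in T_{S_{i,c}}(x)$ and $L_f \tilde{J}_j(x) + L_g \tilde{J}_j(x) u \le 0$ for every $j$, with strict inequality whenever $\tilde{J}_j(x) > 0$. Concurrent executability supplies the strict progress property, and the strict decrease of $\tilde{J}_i$ when $\tilde{J}_i(x) > 0$ is exactly what forces the closed-loop dynamics into $T_{S_{i,c}}(x)$ at boundary points where $\tilde{J}_i(x) = c$. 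At corner points where $\tilde{J}_i(x) = 0$, the assumption $L_g \tilde{J}_i(x) = 0 \iff \tilde{J}_i(x) = 0$ from Definition 1, together with $\tilde{J}_i \ge 0$ attaining its minimum at such $x$ (which also gives $L_f \tilde{J}_i(x) = 0$), makes the $i$-th constraint automatic; the assumption that some input keeps the trajectory inside $A$ handles the portion of $\partial S_{i,c}$ that lies on $\partial A$.

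With this hypothesis established, I would invoke Lemma 1 directly on the compact set $S_{i,c}$. The lemma produces a nonempty invariant set of equilibrium states in $S_{i,c}$ on which $\tilde{J}_1 = \cdots = \tilde{J}_N = 0$, i.e., a nonempty subset of $G$ that is reachable from $S_{i,c}$ while never leaving $S_{i,c}$. This contradicts the standing hypothesis of Proposition 2 that no such state in $G$ is reachable from $S_{i,c}$ without leaving it, completing the contrapositive.

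The hard part will be the joint-feasibility step: combining a concurrent-executability input with a stay-in-$A$ input into a single input whose closed-loop vector field simultaneously lies in $T_{S_{i,c}}(x)$ for every $x \in S_{i,c}$, including points where $\tilde{J}_i(x) = c$ is small (and $c = 0$ in particular). This is the same implicit strengthening used in the proof of Proposition 1, where the ``either/or'' in the conclusion is read as the failure of the joint condition of Lemma 1; that reading has to be adopted here as well for the contrapositive argument to go through cleanly.
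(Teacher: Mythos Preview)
Your proposal is correct and follows essentially the same route as the paper: take the contrapositive and apply Lemma~1 to the compact sublevel set $S_{i,c}$. The only difference is cosmetic---the paper invokes Lemma~1 with the $N-1$ value functions $\tilde{J}_j$, $j\neq i$, and absorbs the decrease of $\tilde{J}_i$ into the ``stay in $S_{i,c}$'' tangent-cone condition, whereas you keep all $N$ functions in the list; your variant is arguably tidier since the resulting invariant set lands directly in $G$ rather than only in $\{\tilde{J}_j=0:\,j\neq i\}$, and your explicit acknowledgment of the joint-feasibility reading matches the implicit strengthening the paper also relies on.
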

\begin{proof}
	We can see this by applying the contrapositive of Lemma $1$ to the compact set $S_{i, c}$ and value functions $\tilde{J}_1, \ldots, \tilde{J}_{i - 1}, \tilde{J}_{i + 1}, \tilde{J}_N$ for some $i \in \{1, \ldots, N \}$ and $c \ge 0$.
	If there does not exist a set of equilibrium states $\bar{G} \subset S_{i, c}$ where $\forall x \in \bar{G}$, $\tilde{J}_1(x) = \cdots = \tilde{J}_{i - 1}(x) = \tilde{J}_{i + 1}(x) = \cdots = \tilde{J}_N(x) = 0$, or if $\bar{G}$ exists but it is possible to reach $\bar{G}$ without leaving
	$S_{i, c}$, then there is some $x \in S_{i, c}$ where it is either not possible to execute each of the tasks $\tilde{J}_1, \ldots, \tilde{J}_{i - 1}, \tilde{J}_{i + 1}, \tilde{J}_N$ or it is not possible
	to pick an input that keeps the state in $S_{i, c}$, meaning that we either cannot make progress on the task $\tilde{J}_i$ or we exit the larger set $A$. In either case, this means that
	$\tilde{J}_1, \ldots, \tilde{J}_N$ are not concurrently executable at that state $x \in S_{i, c} \subseteq A$, or it is not possible to pick an input that keeps the system within the set $A$. 
\end{proof}


\subsection{Existence of a Concurrently Executable Set of States}
If there does exist at least one common ``goal'' equilibrium state $x \in \mathcal{X}$ where $\tilde{J}_1(x) = \cdots = \tilde{J}_N(x) = 0$, it would intuitively seem that $\tilde{J}_1, \ldots, \tilde{J}_N$ should at least be concurrently executable when close enough to such a ``goal'' state. We prove in Proposition 3 that this intuition is also true.

\begin{prop}
	Let $\tilde{J}_1, \ldots, \tilde{J}_N$ be a set of learned value functions of the form in (2) that are all smooth and not identically equal to $0$.
	Assume there exists some non-empty set $G \subset \mathcal{X}$ where $\forall x \in G$, $x$ is an equilibrium point and $\tilde{J}_1(x) = \cdots = \tilde{J}_N(x) = 0$. 
	Referring to the system dynamics in (1), assume that $g(x)$ has full row rank.
	Then there exists some set $F$, where $G \subset F \subseteq \mathcal{X}$, and for every $x \in F$, $\tilde{J}_1, \ldots, \tilde{J}_N$ is concurrently controllable, and therefore also concurrently executable, at the state $x$. 
\end{prop}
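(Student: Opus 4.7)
The plan is to split the argument into a trivial part on $G$ and a local extension just off $G$, via a Taylor expansion of each $\tilde{J}_i$ at a point of $G$ combined with a positive-semidefinite cancellation argument that exploits the full row rank of $g$.

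First I would observe that concurrent controllability is automatic on $G$ itself. If $x^{*} \in G$ then $\tilde{J}_i(x^{*}) = 0$ for every $i$, and the standing assumption carried from Definitions 1 and 4 gives $L_g \tilde{J}_i(x^{*}) = 0$ for every $i$. Then every subset sum $\sum_{i \in P} L_g \tilde{J}_i(x^{*})$ vanishes, so the clause $\sum_{i \in P} L_g \tilde{J}_i(x) \ne 0$ in Definition 4 is never satisfied and the forbidden relation cannot occur. Thus $G$ is already contained in any candidate for $F$, and the remaining task is to adjoin at least one additional state $x \notin G$.

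Next I would fix $x^{*} \in G$ and work locally. Since each $\tilde{J}_i$ is a cost-to-go with nonnegative integrand, $\tilde{J}_i \ge 0$; together with $\tilde{J}_i(x^{*}) = 0$ and smoothness this makes $x^{*}$ a local minimizer, so $\nabla \tilde{J}_i(x^{*}) = 0$ and the Hessian $H_i := \nabla^{2}\tilde{J}_i(x^{*})$ is positive semidefinite. A Taylor expansion then gives
\begin{equation*}
L_g \tilde{J}_i(x^{*} + v)^{\top} = g(x^{*})^{\top} H_i v + O(\|v\|^{2}).
\end{equation*}
Assuming for contradiction that for some small $v$ there exist subsets $P,Q \subseteq \{1,\ldots,N\}$ and $k > 0$ with $\sum_{i \in P} L_g \tilde{J}_i(x^{*}+v) = -k \sum_{i \in Q} L_g \tilde{J}_i(x^{*}+v) \ne 0$, I would match the linear-in-$v$ part to obtain $g(x^{*})^{\top}\bigl(\sum_{i \in P}H_i + k\sum_{i \in Q}H_i\bigr)v = 0$. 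Full row rank of $g(x^{*})$ makes $g(x^{*})^{\top}$ injective, so the vector in parentheses must itself be zero; pairing with $v$ and using $H_i \succeq 0$ together with $k > 0$ then forces $v^{\top}H_i v = 0$, and hence $H_i v = 0$, for every $i \in P \cup Q$. This makes the leading linear coefficient of the left-hand side vanish, which is what will drive the desired contradiction.

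The step I expect to require the most care is making the contradiction rigorous when both suspect sums also have vanishing linear part, since then the $O(\|v\|^{2})$ remainders are not negligible and the PSD trick above does not by itself conclude. My plan is to build $F$ from a sector of directions $v$ based at $x^{*}$ lying in the complement of $\bigcup_{i:\, H_i \ne 0} \ker H_i$; such $v$ are dense in $\mathbb{R}^{n}$ because each $\ker H_i$ is a proper subspace, and along any ray in this sector the linear terms are $\Theta(\|v\|)$ while the remainder is $O(\|v\|^{2})$, so for all sufficiently small $\|v\|$ the contradiction in the previous paragraph becomes genuine. This sector, adjoined to $G$, yields a set $F$ with $G \subsetneq F \subseteq \mathcal{X}$ on which concurrent controllability, and hence by Remark 2 concurrent executability, holds at every point.
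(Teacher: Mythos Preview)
Your approach is genuinely different from the paper's, though both begin by using the full row rank of $g$ to reduce the Lie-derivative condition to one on the gradients. The paper Taylor-expands each $\tilde{J}_i$ to first order around the nearby point $a$ and evaluates at a goal $x\in G$: from $0=\tilde{J}_i(x)\approx \tilde{J}_i(a)+\nabla_a\tilde{J}_i^{\top}(x-a)$ it reads off $\langle \sum_{i\in P}\nabla_a\tilde{J}_i,\,x-a\rangle\approx -\sum_{i\in P}\tilde{J}_i(a)<0$ and likewise for $Q$, so both subset sums have strictly negative inner product with the common vector $x-a$ and therefore cannot be negative scalar multiples of one another. You instead expand at $x^{*}\in G$, use nonnegativity of each $\tilde{J}_i$ to get $\nabla\tilde{J}_i(x^{*})=0$ and $H_i\succeq 0$, and run a PSD cancellation on the Hessians. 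The paper's route is shorter and more geometric (every gradient sum ``points toward'' the goal) and, at least heuristically, delivers a full neighborhood $B_\delta\supset G$ as $F$, whereas your construction yields only a sector of directions after excising $\bigcup_{H_i\ne 0}\ker H_i$. Your route, in turn, makes the obstruction explicit in second-order data and is more candid about where the remainder matters; you correctly flag the delicate case in which the linear part vanishes, which the paper simply absorbs into ``it is reasonable to ignore the higher order terms.'' One residual gap in your plan: if every $H_i=0$ at $x^{*}$ (e.g.\ $\tilde{J}_i(x)=\lVert x-x^{*}\rVert^{4}$), your sector gives no $\Theta(\lVert v\rVert)$ lower bound and the contradiction does not fire; the paper's inner-product argument formally sidesteps this because it never isolates the Hessian, though at the same level of rigor both arguments would need an additional remark there.
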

\begin{proof}
	If $g(x)$ has full row rank, then $\tilde{J}_1, \ldots, \tilde{J}_N$ being concurrently controllable or not relies solely on the gradients of each value function. We can see that if there exists some $P, Q \subset \{1, \ldots, N\}$, $x \in \mathcal{X}$ and $k > 0$ such that $ \sum_{i \in P} L_g \tilde{J}_i(x) = - k \sum_{i \in Q}  L_g \tilde{J}_i (x)$  where neither summation of Lie derivatives are zero, then it must be the case that:
	\begin{align*} \tag{5}
		\sum_{i \in P} \nabla_x \tilde{J}_i &= - k \sum_{i \in Q} \nabla_x \tilde{J}_i.
	\end{align*}
	Now we must see if there exists some set $F \supset G$ such that no state exists within $F$ where equation (5) is true.\par
Let $B_{\delta} := \{ x \in \mathcal{X} : \exists g \in G, \lVert x - g \rVert < \delta \}$.
For a suitably small $\delta > 0$, let $a \in B_{\delta}$ and $x \in G$ such that:
\begin{align*} \tag{6}
	\tilde{J}_1 (x) = 0 &= \tilde{J}_1(a) + { \nabla_a \tilde{J}_1 }^{\top} \left( x - a \right) + \mathcal{O}(x, a) \\
	\vdots \\
	\tilde{J}_N (x) = 0 &= \tilde{J}_N(a) + { \nabla_a \tilde{J}_N }^{\top} \left( x - a \right) + \mathcal{O}(x, a)
\end{align*}
where $\mathcal{O}(x, a)$ represents the higher order terms in each Taylor expansion.
As the $\delta$ approaches $0$, it is reasonable to ignore the higher order terms to get:
\begin{align*} \tag{7}
	\tilde{J}_1 (x) = 0 &= \tilde{J}_1(a) + { \nabla_a \tilde{J}_1 }^{\top} \left( x - a \right)\\
	\vdots \\
	\tilde{J}_N (x) = 0 &= \tilde{J}_N(a) + { \nabla_a \tilde{J}_N }^{\top} \left( x - a \right).
\end{align*}
If $\nabla_a \tilde{J}_i = 0$ for each $i \in \{1, \ldots, N\}$, then $\tilde{J}_1, \ldots, \tilde{J}_N$ are clearly, by Definition 4, concurrently controllable at $a$. 
Let $P, Q \subset \{1, \ldots, N\}$. 
Let $p = \sum_{i \in P} \tilde{J}_i (a)$ and let $q = \sum_{i \in Q} \tilde{J}_i (a)$.
Without loss of generality, assume that $p, q > 0$.
We can then see that:
$$
	0 = p + \langle \sum_{i \in P} \nabla_a \tilde{J}_i, x - a \rangle = q + \langle \sum_{i \in Q} \nabla_a \tilde{J}_i, x - a \rangle. \eqno{(8)}
$$
	Let $k = \frac{p}{q}$. Multiplying the rightmost expression in (8) by $k$, we see that:
\begin{align*} \tag{9}
	\langle \sum_{i \in P} \nabla_a \tilde{J}_i, x - a \rangle = k \langle \sum_{i \in Q} \nabla_a \tilde{J}_i, x - a \rangle.
\end{align*}
This means that $P, Q$ cannot be such that $\sum_{i \in P} \nabla_a \tilde{J}_i$ and 
$\sum_{i \in Q} \nabla_a \tilde{J}_i$ point in opposite directions
which means that $\tilde{J}_1, \ldots, \tilde{J}_N$ must be concurrently controllable at $a$.
Therefore, for a small enough value of $\delta$, $\tilde{J}_1, \ldots, \tilde{J}_N$ are concurrently controllable at 
the states within $B_{\delta}$. Since none of $\tilde{J}_1, \ldots, \tilde{J}_N$ are identically equal to $0$, 
and each of $\tilde{J}_1, \ldots, \tilde{J}_N$ are continuous, it must be the case that $B_{\delta}$ is non-empty.
\end{proof}
In Proposition 3, we show that when a set of value functions, $\tilde{J}_1, \ldots, \tilde{J}_N$, have a common ``goal'' state, an equilibrium state where each value function is equal to $0$, there is at least some non-empty set 
$F \subset \mathcal{X}$ where each of the value functions are concurrently executable.

\begin{remark}
In the previous proof, we assume that $g(x)$ has full row rank. 
This may seem like a strong assumption, but for a system to have the necessary redundancy to concurrently execute multiple control tasks, it must be overactuated, making the assumption not unreasonable in this context.
\end{remark}

\subsection{Recovering the Optimal Input for Discounted Value Functions}
The results in subsections A and B are concerned with the concurrent executability of learned value functions using the min-norm controller in (3). These results are agnostic to the chosen expressions for $\sigma_i$ in (3).
However, to apply those results in practice, the min-norm controller should recover inputs consistent with the HJB equation satisfied by the learned value functions. This subsection addresses that requirement.\par
For the controller in (3), the work in \cite{gennaroDars} proposes an expression on the right-hand side of each constraint of the form:
$$
\sigma_i(x) = \sqrt{ \left( L_f \tilde{J}_i(x) \right)^2 + q(x) L_g \tilde{J}_i(x) {L_g \tilde{J}_i(x)}^{\top} }. \eqno{(10)}
$$
This recovers the optimal input in the case where $\tilde{J}_i$ approximates a value function of the form in (2)
where it is undiscounted, meaning that $\beta = 0$, and $R(x) = I$.
However, it is well known that RL algorithms tend to converge with fewer iterations when a discount factor is used, which corresponds to some $\beta > 0$ in (3).
In this paper, we propose a more general expression to be used:
$$
\bar{\sigma}_i(x) = \sqrt{ \left( L_f \tilde{J}_i(x) \right)^2 + \left( q(x) - \beta \tilde{J}_i(x) \right) L_g \tilde{J}_i(x) {L_g \tilde{J}_i(x)}^{\top} }
$$
which works for both the discounted and undiscounted case.
When $R(x) = I$ in (3), the optimal input is $u^{\star}(x) = - \frac{1}{2} {g(x)}^{\top} \nabla_x \tilde{J}_i$.
Substituting this into the Hamilton-Jacobi-Bellman (HJB) equation results in:
\begin{align*}
& \frac{1}{4} \left( L_g \tilde{J}_i (x) \right) \left( L_g \tilde{J}_i (x) \right)^{\top} - L_f \tilde{J}_i (x) 
&= q(x) - \beta \tilde{J}_i (x).
\end{align*}
If $\tilde{J}_i$ were the only value function executed by the min-norm controller in (3), and $\bar{\sigma}_i(x)$ was used in place of $\sigma_i$, the constraint would be active and $u^{\star}(x) = - \frac{1}{2} {g(x)}^{\top} \nabla_x \tilde{J}_i$ would be the resulting input, ignoring the effect of slack variables, regardless of whether or not $\tilde{J}_i$ was trained with a discount factor.
With this modification, the min-norm controller can now accommodate the value functions that are trained with a discount factor -- something that is commonly done within Reinforcement Learning.

\section{SIMULATION RESULTS}
In this section, we provide illustrative examples demonstrating the propositions presented in the previous section with scenarios involving a planar mobile robot.
Let $\mathcal{X} = \mathbb{R}^2$ be the robot position and the input space $\mathcal{U} = \mathbb{R}^2$, so that $\dot{x}(t) = u(t)$.
By picking a two-dimensional example, we are able to clearly illustrate the implications of Propositions 1, 2 and 3 which provide insight into when it is possible for trained value functions to be concurrently executable. 
\par
\begin{example}
First we will demonstrate the implications of Proposition 1, which tells us that if for a set of value functions $\tilde{J}_1, \ldots, \tilde{J}_N$ there is no $x \in \mathcal{X}$ where $\tilde{J}_1(x) = \cdots = \tilde{J}_N(x) = 0$, then there is no compact set $A \subset \mathcal{X}$ where it is possible to concurrently execute each task without leaving the set $A$. Consider the case where we train two value functions -- each encoding the task of moving to a distinct point in the state space. 
$\tilde{J}_1$ is trained to go to the point ${\begin{bmatrix} -1.5 & 1.5 \end{bmatrix}}^{\top}$
with an instantaneous cost of $q_1(x) + { \lVert u \rVert }^2$ and
$\tilde{J}_2$ is trained to go to the point ${\begin{bmatrix} 1.5 & -1.5 \end{bmatrix}}^{\top}$
with an instantaneous cost of $q_2(x) + { \lVert u \rVert }^2$.
$q_1(x)$ and $q_2(x)$ are the distances from ${\begin{bmatrix} -1.5 & 1.5 \end{bmatrix}}^{\top}$
and ${\begin{bmatrix} 1.5 & -1.5 \end{bmatrix}}^{\top}$, respectively, multiplied by $5$.
Heatmaps of the two trained value functions can be found in Fig. 1.
To numerically evaluate the relationship between the Lie derivatives, which simplify to the gradients in the single integrator case, between  $\tilde{J}_1$ and $\tilde{J}_2$, we plotted a heatmap of the angle between the gradients of $\tilde{J}_1$ and $\tilde{J}_2$ in Fig. 2.
Points that are closer to red, in either heatmap, are points where the value functions are not concurrently executable, as the angle between the gradients are around $\pi$, meaning that they point in opposite directions.
Points that are closer to green are points where the angle between the two gradients are close to $0$.
As expected, we can see a line of states where $\tilde{J}_1$ and $\tilde{J}_2$ are not concurrently executable.
We claim that it would not be possible to find a compact set $A \subset \mathcal{X}$ where it would be possible
to concurrently execute both $\tilde{J}_1$ and $\tilde{J}_2$ without leaving $A$.
A trajectory resulting from using the min-norm controller from (3) with $\tilde{J}_1$ and $\tilde{J}_2$ encoded as constraints is also shown in Fig. 2, illustrating the effect of the states that are not concurrently executable as the trajectory gets stuck.\par
\begin{figure}[]
	\centering
	\begin{minipage}[c]{0.28\linewidth}
		\centering
		\includegraphics[width=0.9\linewidth]{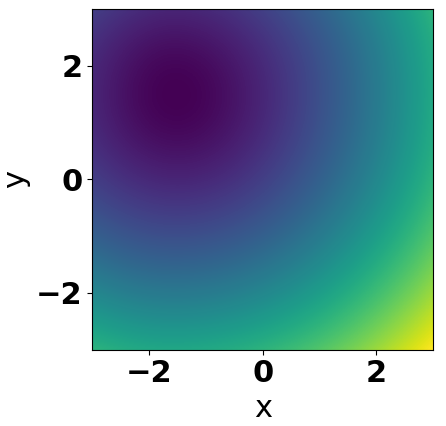}
	\end{minipage}
	\begin{minipage}[c]{0.28\linewidth}
		\centering
		\includegraphics[width=0.9\linewidth]{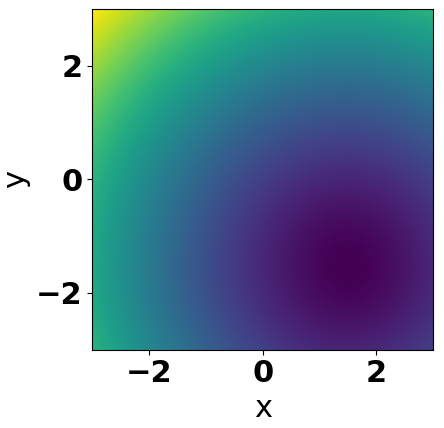}
	\end{minipage}%
	\begin{minipage}[c]{0.31\linewidth}
		\centering
		\includegraphics[width=0.9\linewidth]{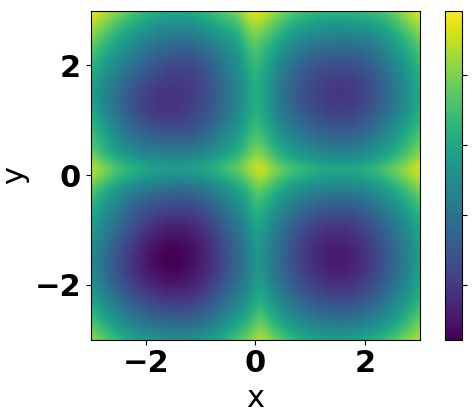}
	\end{minipage}%
	\caption{Heatmaps of $\tilde{J}_1$, $\tilde{J}_2$, and $\tilde{J}_3$, respectively. Points closer to yellow represent larger values of the learned cost-to-go functions $\tilde{J}_1$, $\tilde{J}_2$, and $\tilde{J}_3$.}
\end{figure}
\end{example}
\begin{example}
Next, we will demonstrate the implications of Propositions 2 and 3.
$\tilde{J}_3$ is trained to be able to go to the nearest of four points among
$ {\begin{bmatrix} -1.5 & 1.5 \end{bmatrix}}^{\top}$, $ {\begin{bmatrix} 1.5 & 1.5 \end{bmatrix}}^{\top} $, $ {\begin{bmatrix} -1.5 & -1.5 \end{bmatrix}}^{\top}$ and $ {\begin{bmatrix} 1.5 & -1.5 \end{bmatrix}}^{\top} $.
It is trained using an instantaneous cost of $q_3(x) + { \lVert u \rVert }^2$ where $q_3(x)$ is the distance to the nearest of the four points multiplied by $5$.
A heatmap of $\tilde{J}_3$ is plotted in Fig. 1.\par
Recall that we previously trained $\tilde{J}_2$ to go to the point ${\begin{bmatrix} 1.5 & -1.5 \end{bmatrix}}^{\top}$.
To analyze the relationship between $\tilde{J}_2$ and $\tilde{J}_3$, we again plotted a heatmap in Fig. 2 with the same color code as before.
Since $\tilde{J}_2$ and $\tilde{J}_3$ share a ``goal'' state, we would expect, based on Proposition 3, that there exists some subset of the state space around the point ${\begin{bmatrix} 1.5 & -1.5 \end{bmatrix}}^{\top}$ where both $\tilde{J}_2$ and $\tilde{J}_3$ are concurrently executable.
We can see that this is the case and it would be possible to draw a ball around the point where both
$\tilde{J}_2$ and $\tilde{J}_3$ are concurrently executable at every state in the ball.
However, as we expand the ball, we can see that it would eventually encompass states where $\tilde{J}_2$ and $\tilde{J}_3$ are not concurrently executable.
Near these states, we can see that it would not be possible to make progress on one of the tasks without leaving the sublevel set of the other value function.
$\tilde{J}_2$ and $\tilde{J}_3$ not being concurrently executable at these states is an implication of Proposition 2.
Two different trajectories are plotted in Fig. 2. We can see that one trajectory gets stuck as it ends up at a state where $\tilde{J}_2$ and $\tilde{J}_3$ are not concurrently executable while the other trajectory which started close enough to a goal state successfully reaches the goal state.
\par
\begin{figure}[]
	\centering
	\begin{minipage}[c]{0.40\linewidth}
		\centering
		\includegraphics[width=0.9\linewidth]{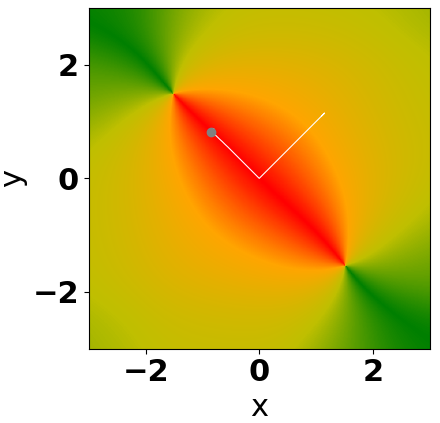}
	\end{minipage}
	\begin{minipage}[c]{0.48\linewidth}
		\centering
		\includegraphics[width=0.9\linewidth]{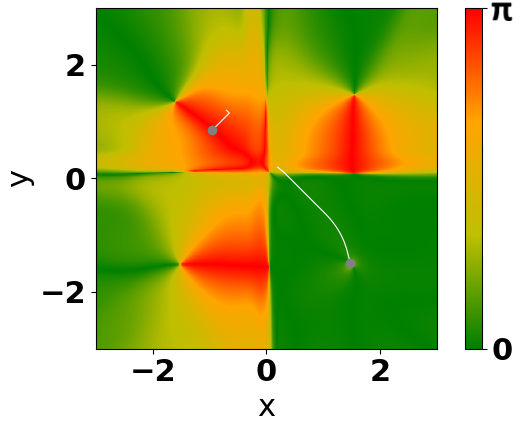}
	\end{minipage}%
	\caption{
	Heatmap of angles between the gradients of $\tilde{J}_1$ and $\tilde{J}_2$ on the left corresponding to Example 1.
	Heatmap of angles between the gradients of $\tilde{J}_2$ and $\tilde{J}_3$ on the right corresponding to Example 2.
	Points that are closer to red are points where the two value functions in the scenario are not concurrently executable.
	Both images contain trajectories, in white, of the min-norm controller executing both tasks in each scenario from different initial states, with the gray circle being the final position of each trajectory.
	In Example 1, on the left, the trajectory gets stuck as it ends up in a state that is not concurrently executable.
	In Example 2, on the right, one trajectory gets stuck while the other that starts close enough to the concurrently executable region reaches one of the four goal points.
	}
\end{figure}
\end{example}
These examples may point toward guidelines for practitioners aiming to use the framework developed in \cite{gennaroDars}, \cite{meAamas}. When a set of value functions are not concurrently executable, state cost functions may need to be redesigned so that at least one common goal state exists where each of the learned value functions would equal $0$.
When there still exist parts of the state space that are not concurrently executable, some tasks may need to be prioritized through the use of the slack variables in (3).

\section{CONCLUSION}
In this paper, we established necessary and sufficient conditions for the concurrent execution of learned value functions when utilizing the constraint-based framework of \cite{gennaroDars}, \cite{meAamas}.
The presented theorems rigorously characterize when control tasks learned using RL can be made concurrently executable, when they might inherently be so and when concurrent execution of tasks is impossible.
Additionally, we present a modification to the min-norm controller in order to account for value functions trained with a discount factor, making the framework more compatible with typical RL practices.
The theorems were demonstrated using examples involving a planar mobile robot, illustrating the effects of the necessary and sufficient conditions in attempting to combine learned value functions.
While the results in this paper directly apply to the constraint-based framework of \cite{gennaroDars}, \cite{meAamas}, the results may have broader implications in the integration of RL and control-theoretic tools.
For example, these results may provide a theoretical basis for the rigorous development or analysis of multi-objective RL methods.
\addtolength{\textheight}{-12cm}   







\bibliographystyle{ieeetr} 
\bibliography{root}

@InProceedings{gennaroDars,
author="Notomista, Gennaro",
editor="Bourgeois, Julien
and Paik, Jamie
and Piranda, Beno{\^i}t
and Werfel, Justin
and Hauert, Sabine
and Pierson, Alyssa
and Hamann, Heiko
and Lam, Tin Lun
and Matsuno, Fumitoshi
and Mehr, Negar
and Makhoul, Abdallah",
title="A Constrained-Optimization Approach to the Execution of Prioritized Stacks of Learned Multi-robot Tasks",
booktitle="Distributed Autonomous Robotic Systems",
year="2024",
publisher="Springer Nature Switzerland",
address="Cham",
pages="479--493",
abstract="This paper presents a constrained-optimization formulation for the prioritized execution of learned robot tasks. The framework lends itself to the execution of tasks encoded by value functions, such as tasks learned using the reinforcement learning paradigm. The tasks are encoded as constraints of a convex optimization program by using control Lyapunov functions. Moreover, an additional constraint is enforced in order to specify relative priorities between the tasks. The proposed approach is showcased in simulation using a team of mobile robots executing coordinated multi-robot tasks.",
isbn="978-3-031-51497-5"
}

@ARTICLE{actorCriticSurvey,
author={Grondman, Ivo and Busoniu, Lucian and Lopes, Gabriel A. D. and Babuska, Robert},
journal={IEEE Transactions on Systems, Man, and Cybernetics, Part C (Applications and Reviews)}, 
title={A Survey of Actor-Critic Reinforcement Learning: Standard and Natural Policy Gradients}, 
year={2012},
volume={42},
number={6},
pages={1291-1307},
keywords={Approximation methods;Equations;Approximation algorithms;Standards;Optimization;Convergence;Actor-critic;natural gradient;policy gradient;reinforcement learning (RL)},
doi={10.1109/TSMCC.2012.2218595}}

@InProceedings{cfvi,
title = 	 {Value Iteration in Continuous Actions, States and Time},
author =       {Lutter, Michael and Mannor, Shie and Peters, Jan and Fox, Dieter and Garg, Animesh},
booktitle = 	 {Proceedings of the 38th International Conference on Machine Learning},
pages = 	 {7224--7234},
year = 	 {2021},
editor = 	 {Meila, Marina and Zhang, Tong},
volume = 	 {139},
series = 	 {Proceedings of Machine Learning Research},
month = 	 {18--24 Jul},
publisher =    {PMLR},
pdf = 	 {http://proceedings.mlr.press/v139/lutter21a/lutter21a.pdf},
url = 	 {https://proceedings.mlr.press/v139/lutter21a.html},
abstract = 	 {Classical value iteration approaches are not applicable to environments with continuous states and actions. For such environments the states and actions must be discretized, which leads to an exponential increase in computational complexity. In this paper, we propose continuous fitted value iteration (cFVI). This algorithm enables dynamic programming for continuous states and actions with a known dynamics model. Exploiting the continuous time formulation, the optimal policy can be derived for non-linear control-affine dynamics. This closed-form solution enables the efficient extension of value iteration to continuous environments. We show in non-linear control experiments that the dynamic programming solution obtains the same quantitative performance as deep reinforcement learning methods in simulation but excels when transferred to the physical system.The policy obtained by cFVI is more robust to changes in the dynamics despite using only a deterministic model and without explicitly incorporating robustness in the optimization}
}

@misc{esb,
      title={Extended Set-based Tasks for Multi-task Execution and Prioritization},
      author={Gennaro Notomista and Mario Selvaggio and Francesca Pagano and María Santos and Siddharth Mayya and Vincenzo Lippiello and Cristian Secchi},
      year={2025},
      eprint={2310.16189},
      archivePrefix={arXiv},
      primaryClass={eess.SY},
      url={https://arxiv.org/abs/2310.16189},
}

@ARTICLE{jacobianTasks,
author={Antonelli, Gianluca},
journal={IEEE Transactions on Robotics}, 
title={Stability Analysis for Prioritized Closed-Loop Inverse Kinematic Algorithms for Redundant Robotic Systems}, 
year={2009},
volume={25},
number={5},
pages={985-994},
keywords={Stability analysis;Robot kinematics;Jacobian matrices;Algorithm design and analysis;Control systems;Acceleration;Robot control;Humanoid robots;Position control;Manipulators;Inverse kinematics;redundant robots},
doi={10.1109/TRO.2009.2017135}}

@book{bertsekas,
  title={Reinforcement Learning and Optimal Control},
  author={Bertsekas, D.},
  isbn={9781886529397},
  series={Athena Scientific optimization and computation series},
  url={https://books.google.ca/books?id=ZlBIyQEACAAJ},
  year={2019},
  publisher={Athena Scientific}
}

@book{redundancyBook,
	author = {Milutinovic, Dejan and Rosen, Jacob},
	title = {Redundancy in Robot Manipulators and Multi-Robot Systems},
	year = {2012},
	isbn = {3642339700},
	publisher = {Springer Publishing Company, Incorporated},
	abstract = {The trend in the evolution of robotic systems is that the number of degrees of freedom increases. This is visible both in robot manipulator design and in the shift of focus from single to multi-robot systems. Following the principles of evolution in nature, one may infer that adding degrees of freedom to robot systems design is beneficial. However, since nature did not select snake-like bodies for all creatures, it is reasonable to expect the presence of a certain selection pressure on the number of degrees of freedom. Thus, understanding costs and benefits of multiple degrees of freedom, especially those that create redundancy, is a fundamental problem in the field of robotics. This volume is mostly based on the works presented at the workshop on Redundancy in Robot Manipulators and Multi-Robot Systems at the IEEE/RSJ International Conference on Intelligent Robots and Systems - IROS 2011. The workshop was envisioned as a dialog between researchers from two separate, but obviously related fields of robotics: one that deals with systems having multiple degrees of freedom, including redundant robot manipulators, and the other that deals with multirobot systems. The volume consists of twelve chapters, each representing one of the two fields.}
}

@article{robotManipulatorsRedundancy1,
	author = {Yoshihiko Nakamura and Hideo Hanafusa and Tsuneo Yoshikawa},
	title ={Task-Priority Based Redundancy Control of Robot Manipulators},

	journal = {The International Journal of Robotics Research},
	volume = {6},
	number = {2},
	pages = {3-15},
	year = {1987},
	doi = {10.1177/027836498700600201},

	URL = { 
		        https://doi.org/10.1177/027836498700600201
	},
	eprint = { 
		        https://doi.org/10.1177/027836498700600201
	}
	,
		    abstract = { In this paper, we describe a new scheme for redundancy control of robot manipulators. We introduce the concept of task priority in relation to the inverse kinematic problem of redundant robot manipulators. A required task is divided into subtasks according to the order of priority. We propose to determine the joint motions of robot manipulators so that subtasks with lower priority can be performed utilizing re dundancy on subtasks with higher priority. This procedure is formulated using the pseudoinverses of Jacobian matrices. Most problems of redundancy utilization can be formulated in the framework of tasks with the order of priority. The results of numerical simulations and experiments show the effectiveness of the proposed redundancy control scheme. }
}

@ARTICLE{multiRobotRedundancy1,
	  author={Roehr, Thomas M.},
	    journal={IEEE Transactions on Robotics}, 
	      title={Active Exploitation of Redundancies in Reconfigurable Multirobot Systems}, 
	        year={2022},
		  volume={38},
		    number={1},
		      pages={180-196},
		        keywords={Robots;Multi-robot systems;Organizations;Planning;Robot kinematics;Redundancy;Task analysis;Multirobot systems;planning;reconfigurable robots;scheduling and coordination;space robotics and automation},
			  doi={10.1109/TRO.2021.3118284}}

@INPROCEEDINGS{gennaroEcc,
	  author={Notomista, Gennaro and Mayya, Siddharth and Hutchinson, Seth and Egerstedt, Magnus},
	    booktitle={2019 18th European Control Conference (ECC)}, 
	      title={An Optimal Task Allocation Strategy for Heterogeneous Multi-Robot Systems}, 
	        year={2019},
		  volume={},
		    number={},
		      pages={2071-2076},
		        keywords={},
			  doi={10.23919/ECC.2019.8795895}}

@book{khalil,
      author        = "Khalil, Hassan K",
      title         = "{Nonlinear systems; 3rd ed.}",
      publisher     = "Prentice-Hall",
      address       = "Upper Saddle River, NJ",
      year          = "2002",
      url           = "https://cds.cern.ch/record/1173048",
}

@inproceedings{meAamas,
author = {Tahmid, Sheikh A. and Notomista, Gennaro},
title = {Value Iteration for Learning Concurrently Executable Robotic Control Tasks},
year = {2025},
isbn = {9798400714269},
publisher = {International Foundation for Autonomous Agents and Multiagent Systems},
address = {Richland, SC},
abstract = {Many modern robotic systems such as multi-robot systems and manipulators exhibit redundancy, a property owing to which they are capable of executing multiple tasks. This work proposes a novel method, based on the Reinforcement Learning (RL) paradigm, to train redundant robots to be able to execute multiple tasks concurrently. Our approach differs from typical multi-objective RL methods insofar as the learned tasks can be combined and executed in possibly time-varying prioritized stacks. We do so by first defining a notion of task independence between learned value functions. We then use our definition of task independence to propose a cost functional that encourages a policy, based on an approximated value function, to accomplish its control objective while minimally interfering with the execution of higher priority tasks. This allows us to train a set of control policies that can be executed simultaneously. We also introduce a version of fitted value iteration to learn to approximate our proposed cost functional efficiently. We demonstrate our approach on several scenarios and robotic systems.},
booktitle = {Proceedings of the 24th International Conference on Autonomous Agents and Multiagent Systems},
pages = {2006–2014},
numpages = {9},
keywords = {multi-objective rl, optimization-based control, redundant robots, reinforcement learning},
location = {Detroit, MI, USA},
series = {AAMAS '25}
}

@InProceedings{sac,
  title = 	 {Soft Actor-Critic: Off-Policy Maximum Entropy Deep Reinforcement Learning with a Stochastic Actor},
  author =       {Haarnoja, Tuomas and Zhou, Aurick and Abbeel, Pieter and Levine, Sergey},
  booktitle = 	 {Proceedings of the 35th International Conference on Machine Learning},
  pages = 	 {1861--1870},
  year = 	 {2018},
  editor = 	 {Dy, Jennifer and Krause, Andreas},
  volume = 	 {80},
  series = 	 {Proceedings of Machine Learning Research},
  month = 	 {10--15 Jul},
  publisher =    {PMLR},
  pdf = 	 {http://proceedings.mlr.press/v80/haarnoja18b/haarnoja18b.pdf},
  url = 	 {https://proceedings.mlr.press/v80/haarnoja18b.html},
  abstract = 	 {Model-free deep reinforcement learning (RL) algorithms have been demonstrated on a range of challenging decision making and control tasks. However, these methods typically suffer from two major challenges: very high sample complexity and brittle convergence properties, which necessitate meticulous hyperparameter tuning. Both of these challenges severely limit the applicability of such methods to complex, real-world domains. In this paper, we propose soft actor-critic, an off-policy actor-critic deep RL algorithm based on the maximum entropy reinforcement learning framework. In this framework, the actor aims to maximize expected reward while also maximizing entropy. That is, to succeed at the task while acting as randomly as possible. Prior deep RL methods based on this framework have been formulated as Q-learning methods. By combining off-policy updates with a stable stochastic actor-critic formulation, our method achieves state-of-the-art performance on a range of continuous control benchmark tasks, outperforming prior on-policy and off-policy methods. Furthermore, we demonstrate that, in contrast to other off-policy algorithms, our approach is very stable, achieving very similar performance across different random seeds.}
}

@ARTICLE{khatib,
  author={Khatib, O.},
  journal={IEEE Journal on Robotics and Automation}, 
  title={A unified approach for motion and force control of robot manipulators: The operational space formulation}, 
  year={1987},
  volume={3},
  number={1},
  pages={43-53},
  keywords={Force control;Orbital robotics;Manipulator dynamics;Motion control;Motion analysis;Control systems;Equations;Control system analysis;Kinematics;Robot sensing systems},
  doi={10.1109/JRA.1987.1087068}}

\end{document}